\numberwithin{equation}{section}
    \newcommand{\keywords}[1]{\par\addvspace\baselineskip
     \noindent\keywordname\enspace\ignorespaces#1}
\begin{document}
\mainmatter
\title{Controlling the Transmission Dynamics of COVID-19}
\titlerunning{Controlling the Transmission Dynamics of COVID-19}
\authorrunning{S.~E.~Moore and E.~Okyere}
\author{Stephen E.~Moore\inst{1}\thanks{Corresponding author : stephen.moore@ucc.edu.gh} \and Eric Okyere\inst{2}}
\institute{Department of Mathematics, \\
University of Cape Coast, Ghana, \\
\and
Department of Mathematics and Statistics, \\
University of Energy and Natural Resources, Ghana.\
}
\maketitle
\begin{abstract}
The outbreak of COVID-19 caused by SARS-CoV-2 in Wuhan and other cities in China in 2019 has become a global pandemic
as declared by World Health Organization (WHO) in the first quarter of 2020 \cite{Who:2020a}.
The delay in diagnosis, limited hospital resources
and other treatment resources leads to rapid spread of COVID-19. In this article, we
consider an optimal control COVID-19 transmission model and assess the impact of some
control measures that can lead to the reduction of exposed and infectious individuals in
the population. We investigate three control strategies for this deadly infectious disease
using personal protection, treatment with early diagnosis, treatment with delay diagnosis
and spraying of virus in the environment as time-dependent control functions in our
dynamical model to curb the disease spread.

\keywords{
COVID-19, delay in diagnosis, dynamic model, compartmental models, optimal control, Hamiltonian
}
\end{abstract}
\,\,
\section{Introduction}
\label{sec:Introduction}
The recent outbreak of the deadly and highly infectious COVID-19 disease caused by SARS-CoV-2 in Wuhan and other cities in China in 2019 has become a global pandemic as declared by World Health Organization (WHO) in the first quarter of 2020 \cite{Who:2020a}. The most vulnerable people to develop serious complications from this dangerous disease are the elderly with underlying medical problems. As at 29th March, 2020, the 69th situation report by WHO indicated that the deadly COVID-19 disease had globally infected $634,835$ people with $29,891$ deaths, see \cite{Who:2020b}. 

The understanding of the transmission dynamics of infectious diseases has been well-studied and researched in mathematics and usually referred to
as mathematical epidemiology. These mathematical models have played a major role in increasing understanding of the underlying mechanisms which
influence the spread of diseases and provide guidelines as to how the spread can be controlled \cite{brauer2017mathematical, siettos2013mathematical, hethcote2000mathematics}. The recent outbreak of the deadly and highly infectious COVID-19 disease has attracted the attention of many authors who have discussed and studied the nature of the virus, its transmission dynamics and the basic reproduction number of the disease, see eg. \cite{chen2020mathematical, zhang2020estimation, FangNiePenny:2020, tang2020updated, wang2020phase, jia2020extended}. Recently, Elsevier and Springer have made open access to several literature for interested researchers \cite{Elsevier:2020, Springer:2020}.

An SEIR mathematical model for the transmission dynamics of COVID-19 disease with data fitting, parameter estimations and sensitivity analysis is studied in \cite{FangNiePenny:2020} whiles a deterministic model for COVID-19 that captures the effect of delay diagnosis on the disease transmission has also been presented, see \cite{XinmiaoLiuHuidiMeng:2020}. In \cite{MizumotoChowell:2020}, the authors explore a statistical analysis of COVID-19 disease data to estimate time-delay adjusted risk for death from this deadly virus in Wuhan, as well as for China excluding Wuhan. Their study suggested that effective social distancing and movement restrictions practices can help minimise the disease transmission. A real-time forecast phenomenological model has also been designed to study the transmission pattern of COVID-19 infectious disease, see e.g., \cite{roosa2020real}. Also, an SEIR-type compartmental modelling concept applied to design a data-driven epidemic model that incorporates governmental actions and individual behavioural reactions for the COVID-19 disease outbreak in Wuhan \cite{lin2020conceptual}.

Mathematical modelling of epidemics using deterministic optimal control problems is widely explored in the literature of mathematical epidemiology. A detailed comprehensive literature of optimal control models in epidemiological modeling and numerical approximation techniques can be found in \cite{lenhart2007optimal,sharomi2017optimal}. Several works in literature reveals that epidemic models that are constructed with optimal control problems are appropriate and very useful for suggesting control strategies to curb disease spread, see, e.g., \cite{momoh2018optimal, okyere2019deterministic, bonyah2019modelling, olaniyi2018global}.

The principal purpose of this article is to present control strategies for transmission dynamics of
COVID-19 and to determine strategies that are critical even during instances of delay in diagnosis. In Section~\ref{sec:modelproblemandcontrolmodelproblem}, we formulate an optimal control model for COVID-19 with four control measures. The analysis of the control model is presented in Section~\ref{sec:mathematicalanalysisofthemodel}. In Section~\ref{sec:numericalresults}, we present the numerical results of the optimal control model. Finally, we conlude in Section~\ref{sec:conclusion} with discussions on the control measures.
{\allowdisplaybreaks
\section{Formulation of the Optimal Control Problem}
\label{sec:modelproblemandcontrolmodelproblem}
In this section, we formulate an optimal control model for COVID-19 to derive four control measures with minimal
implementation cost to eradicate the disease after a defined period of time. Our new epidemiological time-dependent 
control model is an extended and modified version of the COVID-19 transmission dynamical model introduced in \cite{XinmiaoLiuHuidiMeng:2020}.
Here, we note that the population is divided into susceptible $(S),$ self-quarantine susceptible $(S_q),$ exposed $(E),$
infectious with timely diagnosis $(I_1),$ infectious with delayed diagnosis $(I_2),$ hospitalized $(H),$ recovered $(R)$
and the viral spread in the environment $(V)$. Following the compartmental transition diagram as shown in Figure~\ref{fig:transmissiondynamicsmodel}, the 
eight-state dynamical model describing the transmission dynamics of COVID-19 is given by
\begin{figure}[htb!]
\centering
 \includegraphics[width=0.8\textwidth]{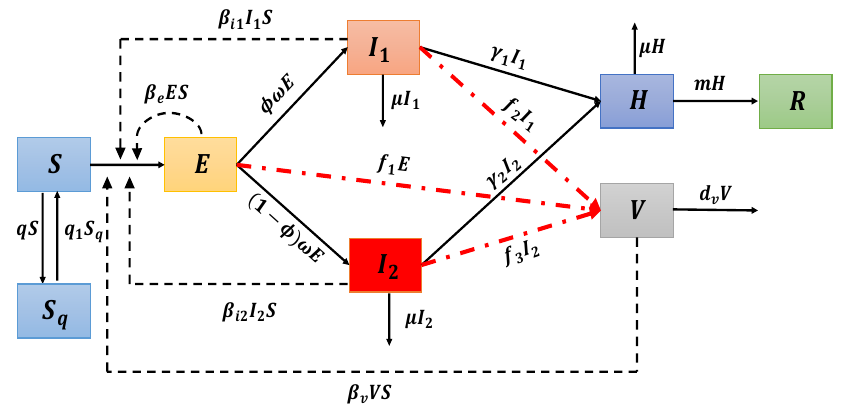}
\caption{Compartmental diagram for the transmission dynamics of COVID-19, see \cite{XinmiaoLiuHuidiMeng:2020}.}
\label{fig:transmissiondynamicsmodel}
\end{figure}
%
\begin{align}
\label{eqn:modelproblem}
\frac{dS}{dt}&=-(\beta_{e}E+\beta_{i_1}I_1+\beta_{i_2}I_2+\beta_{v}V)S-qS+q_{1}S_q \notag \\
\frac{d S_{q}}{dt}&=qS-q_{1}S_{q} \notag  \\
\frac{dE}{dt}&=(\beta_{e}E+\beta_{i_1}I_1+\beta_{i_2}I_2+\beta_{v}V)S- \omega E \notag  \\
\frac{dI_{1}}{dt}&=\phi \omega E-\gamma_{1}I_1-\mu I_{1}  \\
\frac{dI_{2}}{dt}&=(1-\phi) \omega E-\gamma_{2}I_2-\mu I_{2}\notag  \\
\frac{d H}{dt}&=\gamma_{1}I_1+\gamma_{2}I_2- m H-\mu H\notag  \\
\frac{dR}{dt}&=m H\notag  \\
\frac{dV}{dt}&=f_1 E+f_2 I_1+f_3 I_2-d_v V\notag,
\end{align}
where $\beta_e,\beta_{i_1}, \beta_{i_2}$ and $\beta_v$ denote the transmission rates from the exposed, infectious with timely diagnosis,
infectious with delay diagnosis and virus in the environment to the susceptible, respectively. 
Also $f_i, i=1,2,3$ is the rate of virus in the environment from both the exposed and the infectious and removed at rate $d_v.$ 
For the rest of the model parameters, we refer the interested reader to reference \cite{XinmiaoLiuHuidiMeng:2020} for detailed descriptions.
\subsection{COVID-19 Model Problem with Control Measures}
\label{subsec:covid19modelwithcontrols}
Following from the system \eqref{eqn:modelproblem}, we modified the transmission rate by reducing the factor by $(1-u_1)$,
where $u_1$ measures the effort of individuals to protect themselves (i.e. personal protection). The control variable $u_2$
measures the treatment rate of timely diagnosed individuals whiles the $u_3$ measures the treatment rate of delayed diagnosed individuals.
We assume that $u_2I_1$  and $u_3I_2$ individuals are removed from the timely diagnosed class and delayed diagnosed class and added to the
Hospitalized class. The fourth control variable $u_4$ measures the spraying of the environment to prevent viral release.
We also assume that $u_4V$ virus are removed from the environment. We further assume that individuals that recovers at any 
time $t$ after hospitalization and treatment are removed from the hospitalized class to the recovered class. 
With regards to these assumptions, the dynamics of system \eqref{eqn:modelproblem} are modified into the following system of equations:
\begin{align}
\label{eqn:covid19modelwithcontrols}
\frac{dS}{dt}&=-(1-u_1)(\beta_{e}E+\beta_{i_1}I_1+\beta_{i_2}I_2+\beta_{v}V)S-qS+q_{1}S_q\notag  \\
\frac{d S_{q}}{dt}&=qS-q_{1}S_{q} \notag \\
\frac{dE}{dt}&=(1-u_1)(\beta_{e}E+\beta_{i_1}I_1+\beta_{i_2}I_2+\beta_{v}V)S- \omega E \notag  \\
\frac{dI_{1}}{dt}&=\phi \omega E-u_{2}I_1-\mu I_{1}  \\
\frac{dI_{2}}{dt}&=(1-\phi)\omega E-u_{3}I_2-\mu I_{2}\notag  \\
\frac{d{H}}{dt}&=u_{2}I_1+u_{3}I_2- m H-\mu H \notag  \\
\frac{dR}{dt}&=m H \notag  \\
\frac{dV}{dt}&=f_1 E+f_2 I_1+f_3 I_2-d_v V-u_{4}V, \notag
\end{align}
where $q$ is the rate at which susceptible $(S)$ individuals move into self-quarantine $(S_q)$ and 
$q_1$ is the rate at which self-quarantined individuals become susceptible again. Also,  $\mu$ is
the death rate and hospitalized individuals are decreased at recovery rate $m.$ 
}
\section{Mathematical Analysis of the Model}
\label{sec:mathematicalanalysisofthemodel}
 In this section, we will formulate an objective functional and present the existence of
 optimal control by means of Pontryagin's Maximum Principle.
 Given the optimal control problem \eqref{eqn:covid19modelwithcontrols}, we prove the existence of
 control problem following \cite{siettos2013mathematical} and then characterizing it for
optimality.
The objective functional $\mathcal{J}$ formulates the optimization problem of identifying the most effective strategies.
The overall preselected objective involves the minimization of the number of exposed, delayed diagnosed infectious individuals and the viral spread in the environment over a finite time interval $[0,T].$
We define the objective functional $\mathcal{J},$ as follows
\begin{equation}
 \label{eqn:objectivefunctional}
\mathcal{J}(u_{1},u_{2},u_{3},u_{4}):=\int_{0}^{T}\bigg(A_{1}E+A_{2}I_2+A_3V+\dfrac{1}{2}\sum_{i=1}^{4}C_{i}u^{2}_{i}(t)\bigg)dt.
\end{equation}
We aim to minimize the cost functional \eqref{eqn:objectivefunctional} which includes the number of exposed $(E),$
infectious with delay diagnosis $(I_2),$ and the virus in the environment $(V),$
as well as the social costs related to the resources needed for personal protection $C_1u^2_1,$
early detected treatment $C_2u^2_2,$ delay detected treatment $C_3u^2_3,$ and spraying of
environment $C_4u^2_4.$
The control effort is modeled by means of a linear combination of quadratic terms, $u_i^2(t), i=1,\ldots,4.$
The constants $A_j, j=1,\ldots,3$ and $C_i, i=1,\ldots,4$ represent a measure of the relative cost of the
interventions over time $[0,T].$ 

The objective of the control problem is to seek 
functions $(u^*_1(t),u^*_2(t),u^*_3(t),u^*_4(t))$ such that
\begin{equation}
\label{eqn:optimalcontrolsolution}
 \mathcal{J}(u^*_1(t),u^*_2(t),u^*_3(t),u^*_4(t)) := \min \{ \mathcal{J}(u_1,u_2,u_3,u_4), (u_1,u_2,u_3,u_4) \in \mathcal{U} \},
\end{equation}
where the control set is defined as
\begin{align}
 \label{eqn:controlset}
 \mathcal{U} := \{u=(u_1,u_2,u_3,u_4) | u_i(t) \, &\text{is Lebesgue measurable,} \, \notag \\
             & 0 \leq u_i(t)\leq 1, t \in[0,T] \quad \text{for} \quad i=1,\ldots,4\},
\end{align}
subject to the COVID-19 model with controls \eqref{eqn:covid19modelwithcontrols} and appropriate initial conditions.

In the next section, we prove the existence of an optimal control for the system \eqref{eqn:covid19modelwithcontrols}
and then derive the optimality system. It is well known that Pontryagin's maximum principle (PMP) is required to solve
this control problem and the derivation of the necessary conditions \cite{Pontryagin:1985,Pontryagin:1962}.

\subsection{Existence of an Optimal Control}
\label{subsec:existenceoptimalcontrolmodel}
The necessary conditions include the optimality solutions and the adjoint equations that an optimal control must
satisfy which come from Pontryagin's maximum principle \cite{Pontryagin:1962}.
This principle converts the control model \eqref{eqn:covid19modelwithcontrols} and the objective functional \eqref{eqn:objectivefunctional}
into a problem of minimizing pointwise Hamiltonian function \eqref{eqn:hamiltonian}, which is formed by allowing each of the adjoint
variables to correspond to each of the state variables accordingly and combining the results with the objective functional.

\begin{theorem}
 \label{thm:existenceoptimalcontrol}
 Given the objective functional $\mathcal{J}(u_1,u_2,u_3,u_4)$ as in \eqref{eqn:objectivefunctional}, where the control
 set $\mathcal{U}$ given by \eqref{eqn:controlset} is measurable subject to \eqref{eqn:covid19modelwithcontrols} with
 initial conditions given at $t=0,$ then there exists an optimal control $u^*=(u^*_1(t),u^*_2(t),u^*_3(t),u^*_4(t))$ such
 that $\mathcal{J}(u^*_1(t),u^*_2(t),u^*_3(t),u^*_4(t)) := \min \{ \mathcal{J}(u_1,u_2,u_3,u_4), (u_1,u_2,u_3,u_4) \in \mathcal{U}\}.$
\end{theorem}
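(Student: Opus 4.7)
The plan is to verify the classical Fleming--Rishel sufficient conditions for the existence of an optimal control (as stated, for instance, in \cite{lenhart2007optimal}), since these conditions are well-adapted to the quadratic-in-control structure of the cost functional $\mathcal{J}$ and the bilinear form of the state system \eqref{eqn:covid19modelwithcontrols}. Concretely, I would need to check: (i) the set of admissible controls together with their corresponding state trajectories is nonempty; (ii) the admissible control set $\mathcal{U}$ is closed and convex; (iii) the right-hand side of \eqref{eqn:covid19modelwithcontrols} is bounded above by a linear function in the state and control variables; (iv) the integrand $L(X,u,t) = A_1 E + A_2 I_2 + A_3 V + \tfrac{1}{2}\sum C_i u_i^2$ is convex in $u$ on $\mathcal{U}$; and (v) there exist constants $c_1>0$, $c_2\in\mathbb{R}$ and $\beta>1$ with $L(X,u,t)\geq c_1|u|^\beta - c_2$.

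First I would establish a preliminary a priori bound on the state variables. Summing the equations for $S$, $S_q$, $E$, $I_1$, $I_2$, $H$, $R$ and using nonnegativity, one sees that the total human population is nonincreasing (the only loss term is $-\mu(I_1+I_2+H)$), so every human compartment is bounded by the initial total population $N_0$. The equation for $V$ is then linear in $V$ with bounded source term $f_1 E + f_2 I_1 + f_3 I_2$, so $V$ is bounded on $[0,T]$. With these bounds in hand, standard Picard--Lindel\"of arguments applied componentwise give existence of a unique, bounded, absolutely continuous solution of \eqref{eqn:covid19modelwithcontrols} for every $u\in\mathcal{U}$, which verifies (i).

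Conditions (ii) and (iv) are immediate: $\mathcal{U}=[0,1]^4$ restricted to Lebesgue-measurable functions is closed and convex, and the integrand is affine in the state variables and a positive-definite quadratic form in $u$, hence convex in $u$. For (iii), writing the state system in vector form $\dot X = F(t,X,u)$, every nonlinearity is a product of a state component with either another state component or a control, and all state components are uniformly bounded by the a priori estimate above; hence one obtains an estimate of the form $|F(t,X,u)|\leq K_1|X|+K_2|u|+K_3$ on the attainable set, which is exactly the required linear growth. For (v), choosing $c_1=\tfrac{1}{2}\min_i C_i$, $\beta=2$, and $c_2=0$ (since $A_1,A_2,A_3\geq 0$ and $E,I_2,V\geq 0$) gives $L(X,u,t)\geq c_1|u|^2 - c_2$.

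The main obstacle is really just (iii), which relies crucially on the a priori boundedness of the compartments; once that is in hand, the bilinear terms $(\beta_e E+\beta_{i_1}I_1+\beta_{i_2}I_2+\beta_v V)S$ and the control-state products $u_2 I_1$, $u_3 I_2$, $u_4 V$ cause no trouble. With all five hypotheses verified, the Fleming--Rishel existence theorem (see \cite{lenhart2007optimal}) immediately yields an optimal quadruple $u^*=(u_1^*,u_2^*,u_3^*,u_4^*)\in\mathcal{U}$ realising the minimum in \eqref{eqn:optimalcontrolsolution}, completing the proof.
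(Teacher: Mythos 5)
Your proof is correct and follows essentially the same route as the paper, which simply invokes the Fleming--Rishel existence theorem via convexity of the integrand in the controls, \emph{a priori} boundedness of the state, and the Lipschitz property of the state system; you have in fact verified the required hypotheses (nonemptiness, closedness/convexity of $\mathcal{U}$, linear growth, convexity and coercivity of the integrand) in considerably more detail than the paper does. The only point left implicit in your argument is the invariance of the nonnegative orthant, which you need before summing the human compartments to conclude that the total population is nonincreasing; this is standard for systems of this form and does not affect the validity of the proof.
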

\begin{proof}
The existence of an optimal control due to the convexity of the integrand of $\mathcal{J}$
with respect to the control measures $u_i, i=1,\ldots,4,$ an \textit{a priori} boundedness of the solutions
of both the state and adjoint equations and the Lipchitz property of the state system with
respect to the state variables follows from \cite{fleming2012deterministic}. \qed
\end{proof}

{\allowdisplaybreaks
To find the optimal solution, we need the Lagrangian $(\mathbf{L})$ and Hamiltonian $(\mathbf{H})$ for the optimal control
problem \eqref{eqn:covid19modelwithcontrols} and \eqref{eqn:objectivefunctional}. The Lagrangian of the control problem is given by
\begin{align}
 \label{eqn:lagrnagian}
 \mathbf{L} := A_{1}E+A_{2}I_2++A_3V+\dfrac{1}{2}\sum_{i=1}^{4}C_{i}u^{2}_{i}(t).
\end{align}
Since we want the minimal value of the Lagrangian, we define the Hamiltonian function for the system as
\begin{align}
\label{eqn:hamiltonian}
\mathbf{H}&=A_{1}E+A_{2}I_2+A_3V+\dfrac{1}{2}\big(C_{1}u_{1}^{2}+C_{2}u_{2}^{2}+C_{3}u_{3}^{2}+C_{4}u_{4}^{2}\big)\nonumber\\
&+\lambda_{S}\big[-(1-u_1)(\beta_{e}E+\beta_{i_1}I_1+\beta_{i_2}I_2+\beta_{v}V)S-qS+q_{1}S_q\big]\nonumber\\
&+\lambda_{S_q}\big[qS-q_{1}S_{q}\big]
+\lambda_{E}\big[(1-u_1)(\beta_{e}E+\beta_{i_1}I_1+\beta_{i_2}I_2+\beta_{v}V)S-wE\big] \\
&+\lambda_{I_1}\big[\phi wE-u_{2}I_1-\mu I_{1}]
+\lambda_{I_2}\big[(1-\phi)wE-u_{3}I_2-\mu I_{2}\big]\nonumber\\
&+\lambda_{H}\big[u_{2}I_1+u_{3}I_2- m H-\mu H\big]\nonumber\\
&+\lambda_{R}m H
+\lambda_{V}\big[f_1 E+f_2 I_1+f_3 I_2-d_v V-u_{4}V\big]\nonumber,
\end{align}
where $\lambda_j, j\in \{S,S_q,E,I_1,I_2,H,R,V\}$ are the adjoint variables.
}
Next, we apply the necessary conditions to the Hamiltonian $\mathbf{H}$ in \eqref{eqn:hamiltonian}.
\begin{theorem}
 \label{thm:controlfunctionexistence}
  Given an optimal control $u^*:= (u^*_1,u^*_2,u^*_3,u^*_4)$ and a solution\\ $y^*=(S^*,S^*_q,E^*,I_1^*,I_2^*,H^*,R^*,V^*)$ of the
  corresponding state system \eqref{eqn:covid19modelwithcontrols}, there exists adjoint variable $\lambda_j, j\in \{S,S_q,E,I_1,I_2,H,R,V\}$
  satisfying
{\allowdisplaybreaks

\begin{align}
\label{eqn:adjointsystem}
\dfrac{d\lambda_{S}}{dt}&=(\lambda_{S}-\lambda_{E})(1-u_1)\bigg[\beta_{e}E+\beta_{i_1}I_1+\beta_{i_2}I_2+\beta_{v}V\bigg]+q(\lambda_{S}-\lambda_{E})\nonumber\\
\dfrac{d\lambda_{S_q}}{dt}&=q_{1}(\lambda_{S_q}-\lambda_{S})\nonumber\\
\dfrac{d\lambda_{E}}{dt}&=-A_1+(\lambda_{S}-\lambda_{E})(1-u_{1})\beta_{e}S+w\lambda_{E}-\phi w\lambda_{I_1}-(1-\phi)\lambda_{I_2}-\tau_{1}\lambda_{V}\nonumber\\
\dfrac{d\lambda_{I_1}}{dt}&=(\lambda_{S}-\lambda_{E})(1-u_{1})\beta_{i_1}S+(\lambda_{I_1}-\lambda_{H})u_2+\lambda_{I_1}\mu-\tau_{2}\lambda_{V}\\
\dfrac{d\lambda_{I_2}}{dt}&=-A_2+(\lambda_{S}-\lambda_{E})(1-u_{1})\beta_{i_2}S+(\lambda_{I_2}-\lambda_{H})u_3+\lambda_{I_2}\mu-\tau_{3}\lambda_{V}\nonumber\\
\dfrac{d\lambda_{H}}{dt}&=m(\lambda_{H}-\lambda_{R})+\mu\lambda_{H}\nonumber\\
\dfrac{d\lambda_{R}}{dt}&=0\nonumber\\
\dfrac{d\lambda_{V}}{dt}&=-A_3+(\lambda_{S}-\lambda_{E})(1-u_{1})\beta_{v}S+\lambda_{V}(d_v+u_4)\nonumber
\end{align}
with transversality conditions
\begin{equation}
 \label{eqn:transversalityconditions}
 \lambda_j(T) =0, \quad  j\in \{S,S_q,E,I_1,I_2,H,R,V\}.
\end{equation}

Furthermore, the control functions $u^*_1,u^*_2,u^*_3$ and $u^*_4$ are given by
\begin{align*}
u^{*}_{1}&=\min\{1, \max\{0, \Lambda_1\} \},  \\
u^{*}_{2}&= \min\{1, \max\{0, \Lambda_2\} \},\\
u^{*}_{3}&=\min\{1, \max\{0, \Lambda_3\} \}, \quad \text{and} \quad \\
u^{*}_{4}&=\min\{1, \max\{0, \Lambda_4\} \},
\end{align*}
where
\begin{align}
 \label{eqn:controlfunctionconditions}
   \Lambda_1 & = \dfrac{\big(\lambda_{E}-\lambda_{S}\big)(\beta_{e}E+\beta_{i_1}I_1+\beta_{i_2}I_2+\beta_{v}V)S}{C_{1}}, \notag \\
   \Lambda_2 & = \dfrac{\big(\lambda_{I_1}-\lambda_{H}\big)I_1}{C_{2}}, \quad
   \Lambda_3  = \dfrac{\big(\lambda_{I_2}-\lambda_{H}\big)I_2}{C_{3}} \quad \text{and} \quad
   \Lambda_4  = \dfrac{\lambda_{V}V}{C_{4}}.
\end{align}

}
\end{theorem}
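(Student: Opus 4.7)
The plan is to apply Pontryagin's Maximum Principle (PMP) to the Hamiltonian $\mathbf{H}$ defined in \eqref{eqn:hamiltonian}. Existence of an optimal pair $(u^*,y^*)$ is already guaranteed by Theorem~\ref{thm:existenceoptimalcontrol}, so PMP supplies the necessary conditions that this pair must satisfy: an adjoint system $\dot\lambda_j = -\partial \mathbf{H}/\partial y_j$ evaluated along $(u^*,y^*)$, transversality conditions at $t=T$ coming from the free terminal state, and a pointwise minimization condition on $\mathbf{H}$ in the control variable, which after projection onto the admissible set $[0,1]$ yields the stated formulas.

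First I would derive the adjoint system equation by equation. For each state variable $y \in \{S,S_q,E,I_1,I_2,H,R,V\}$, I compute $-\partial \mathbf{H}/\partial y$, taking care of two recurring features: the incidence term $(1-u_1)(\beta_e E+\beta_{i_1}I_1+\beta_{i_2}I_2+\beta_v V)S$ appears with opposite signs in the $S$ and $E$ equations, which produces the common factor $(\lambda_S-\lambda_E)$ in the adjoint equations for $S,E,I_1,I_2,V$; the transfer terms between compartments produce differences such as $(\lambda_{I_1}-\lambda_H)u_2$, $(\lambda_{I_2}-\lambda_H)u_3$, $m(\lambda_H-\lambda_R)$, and $q(\lambda_S-\lambda_{S_q})$. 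The linear running cost contributes $-A_1,-A_2,-A_3$ to the adjoint equations for $E$, $I_2$ and $V$, respectively, and the $R$ equation is $\dot\lambda_R=0$ since $R$ appears nowhere in $\mathbf{H}$. The transversality conditions $\lambda_j(T)=0$ follow because \eqref{eqn:objectivefunctional} has no terminal payoff and none of the state variables is fixed at $t=T$.

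Next I would characterize the optimal controls by imposing the interior optimality condition $\partial \mathbf{H}/\partial u_i = 0$ on the open set where $u_i \in (0,1)$. For $u_1$, differentiating gives $C_1 u_1 + (\lambda_S-\lambda_E)(\beta_e E+\beta_{i_1}I_1+\beta_{i_2}I_2+\beta_v V)S = 0$, producing $\Lambda_1$. For $u_2, u_3, u_4$, the analogous computations yield $C_2 u_2 = (\lambda_H-\lambda_{I_1})I_1$, $C_3 u_3 = (\lambda_H-\lambda_{I_2})I_2$, $C_4 u_4 = \lambda_V V$, which are $\Lambda_2,\Lambda_3,\Lambda_4$. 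Since $\mathbf{H}$ is strictly convex in each $u_i$ (coefficient $C_i/2>0$), the unconstrained minimizer is unique, and the constrained minimizer over $[0,1]$ is obtained by the standard truncation $u_i^* = \min\{1,\max\{0,\Lambda_i\}\}$, which is exactly the stated form.

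The main obstacle is really just careful bookkeeping: with eight state variables, four controls, and a nonlinear incidence term that couples $S$ with $E,I_1,I_2,V$, there are many opportunities to drop a sign or to miss a contribution in some partial derivative. I would double-check by verifying that each incidence-related adjoint equation carries the factor $(\lambda_S-\lambda_E)$ with the correct transmission coefficient, and that every compartment out of which the flow leaves contributes a $+\lambda_{\text{source}}$ while every compartment into which the flow enters contributes a $-\lambda_{\text{sink}}$. Once these signs are verified, the adjoint system, transversality conditions, and control characterizations fall out directly from PMP.
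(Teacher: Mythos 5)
Your proposal is the standard Pontryagin derivation and is essentially the same approach the paper takes; the paper's own ``proof'' is in fact only a one-line citation to Lenhart--Workman, so your outline is strictly more detailed and the reasoning (adjoints as $-\partial\mathbf{H}/\partial y_j$, transversality from the free terminal state and absence of a terminal payoff, interior stationarity in $u_i$ plus convexity and truncation to $[0,1]$) is correct. Two small remarks: your intermediate identities $C_2u_2=(\lambda_H-\lambda_{I_1})I_1$ and $C_3u_3=(\lambda_H-\lambda_{I_2})I_2$ have the sign reversed relative to the $\Lambda_2,\Lambda_3$ you then (correctly) quote --- stationarity of $\mathbf{H}$ gives $C_2u_2=(\lambda_{I_1}-\lambda_H)I_1$, etc. Also, carrying out your computation of $-\partial\mathbf{H}/\partial S$ faithfully yields the term $q(\lambda_S-\lambda_{S_q})$, exactly as you write, whereas the theorem as printed has $q(\lambda_S-\lambda_E)$; likewise the printed $\lambda_E$-equation drops a factor $w$ from $(1-\phi)w\lambda_{I_2}$ and uses undefined symbols $\tau_i$ in place of $f_i$. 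These are typographical errors in the stated adjoint system, not gaps in your argument.
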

\begin{proof}
The proof methodology follows similarly as presented by \cite{lenhart2007optimal}. \qed
\end{proof}

{\allowdisplaybreaks

\section{Numerical Results of the Optimal Control Model and Discussion}
\label{sec:numericalresults}
In this section, we present the numerical solutions for our optimality problem using the fourth-order runge-kutta forward-backward sweep method. This numerical scheme is  very efficient and has been widely used by several authors in simulating their optimal control problems \cite{hugo2017optimal, neilan2010modeling, okosun2014co, bonyah2019modelling, okyere2019deterministic}. The details of this scheme can be found in the monograph \cite{lenhart2007optimal}. Parameter values for our numerical illustrations are adapted from \cite{XinmiaoLiuHuidiMeng:2020} where the authors used data from the city of Wuhan in the Hubei province of China. We assume $A_1=5, A_2=5, A_3=10, C_1=10, C_2=30, C_3=25$ and $C_4=30$. Figure~\ref{fig:optimalcontrolfunction} below represent profiles of the optimal control functions $(u_1, u_2, u_3, u_4).$

\begin{figure}[!htb]
  \centering
  \includegraphics[width=0.5\textwidth]{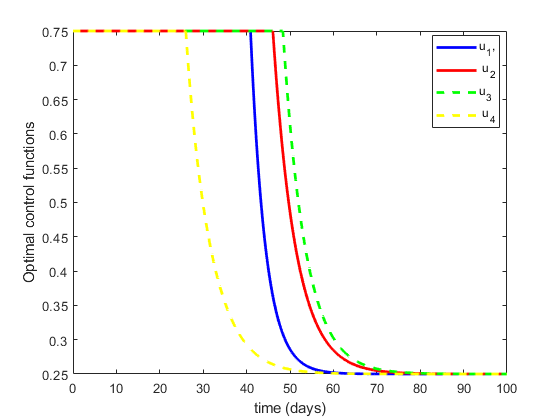}
  \caption{Optimal control functions}
  \label{fig:optimalcontrolfunction}
\end{figure}

\subsection{Control Strategy I}
In this strategy, we consider personal protection $(u_1\neq 0)$, treatment with early diagnosis $(u_2\neq 0)$, treatment with delay diagnosis $(u_3\neq 0)$ time-dependent control functions to minimise our objective functional. Our main aim in this control strategy is to minimise the number of exposed $(E)$, infectious with delay diagnosis $(I_2)$ and virus in the environment $(V)$. In the non-optimal control model~\eqref{eqn:covid19modelwithcontrols}, treatment with early diagnosis $(\gamma_1=u_2=\frac{1}{2.9})$ and treatment with delay diagnosis $(\gamma_2=u_3=\frac{1}{10})$ are captured as constant controls.

\begin{figure}[!htbp]
  \centering
  \includegraphics[width=0.3\textwidth]{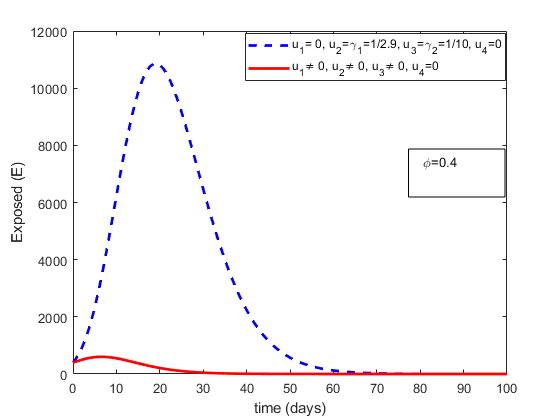}
  \includegraphics[width=0.3\textwidth]{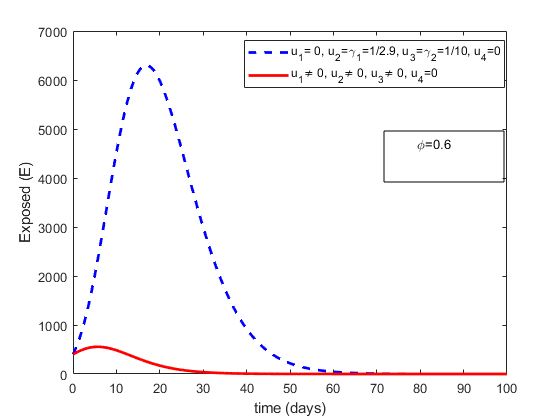}
  \includegraphics[width=0.3\textwidth]{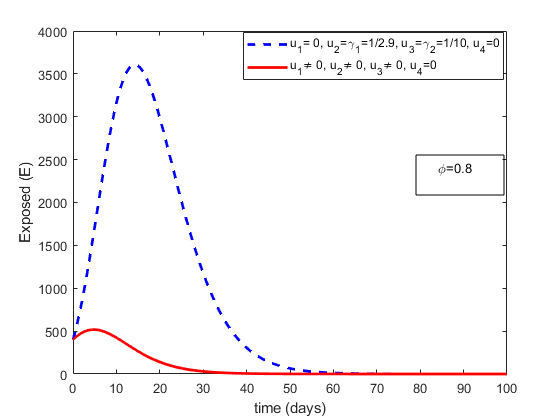}
  \caption{Solution trajectories for Exposed individuals with varying parameter
   $\phi =0.4$, $\phi =0.6$ and $\phi=0.8.$ The red line represents the controlled Exposed population
   whiles the blue line represents the uncontrolled exposed population.}
  \label{fig:exposed_A}
\end{figure}

\begin{figure}[!htbp]
  \centering
  \includegraphics[width=0.3\textwidth]{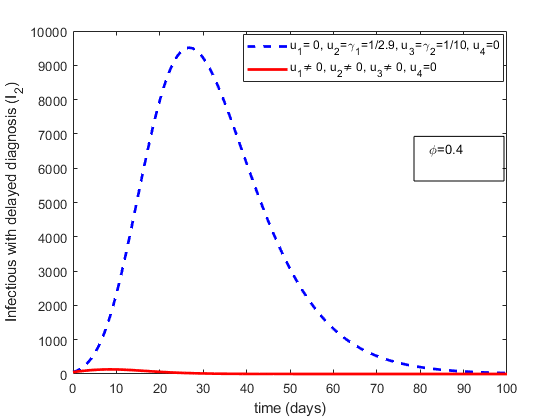}
  \includegraphics[width=0.3\textwidth]{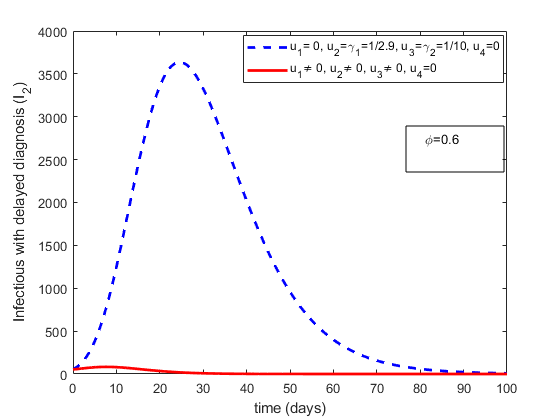}
  \includegraphics[width=0.3\textwidth]{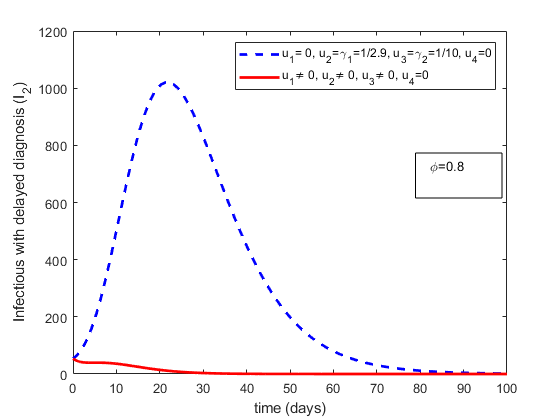}
  \caption{Solution trajectories for Infectious individuals with delayed diagnosis with varying parameter
   $\phi =0.4$, $\phi =0.6$ and $\phi=0.8.$ The red line represents the controlled delayed diagnosed infectious population
   whiles the blue line represents the uncontrolled infectious population.}
  \label{fig:infectiousdelaydiagnosis_A}
\end{figure}

\begin{figure}[!htbp]
  \centering
  \includegraphics[width=0.3\textwidth]{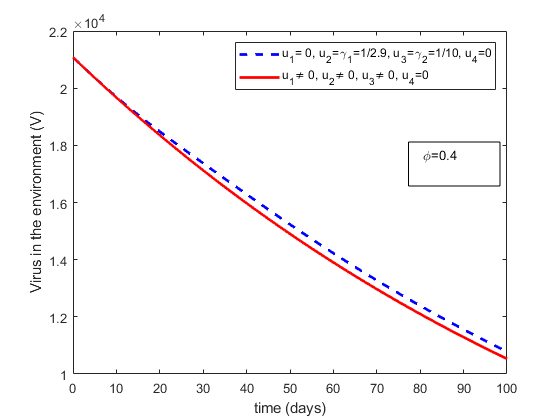}
  \includegraphics[width=0.3\textwidth]{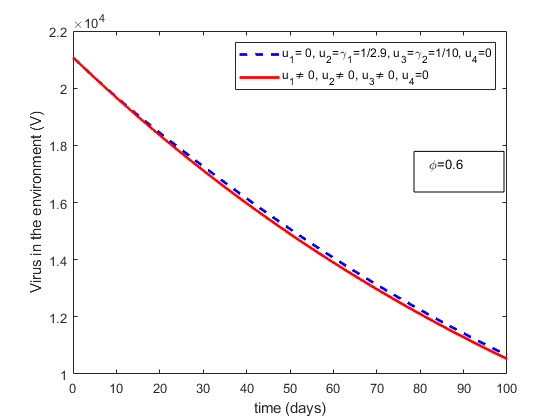}
  \includegraphics[width=0.3\textwidth]{Virus_in_Environment_04A}
  \caption{Controlling the viral spread in the environment with varying proportion of sympomatic individuals
  $\phi =0.4$, $\phi =0.6$ and $\phi=0.8$ where the red line represents the controlled envrionment and the blue
  line represents the uncontrolled environment.}
  \label{fig:virusinenvironment_A}
\end{figure}

\newpage
\subsection{Control Strategy II}
\label{subsec:strategyII}
This strategy deals with treatment control with early diagnosis $(u_2\neq 0)$ and treatment control with delay diagnosis $(u_3\neq 0)$ to minimise our objective functional. Our main aim in this control strategy is to minimise the number of exposed $(E)$, infectious with delay diagnosis $(I_2)$ and virus in the environment $(V)$. In the non-optimal control model~\eqref{eqn:covid19modelwithcontrols}, treatment with early diagnosis $(\gamma_1=u_2=\frac{1}{2.9})$ and treatment with delay diagnosis $(\gamma_2=u_3=\frac{1}{10})$ are captured as constant controls.

\begin{figure}[!htbp]
  \centering
  \includegraphics[width=0.3\textwidth]{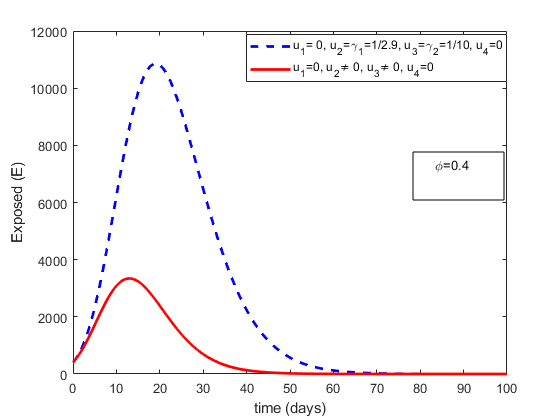}
  \includegraphics[width=0.3\textwidth]{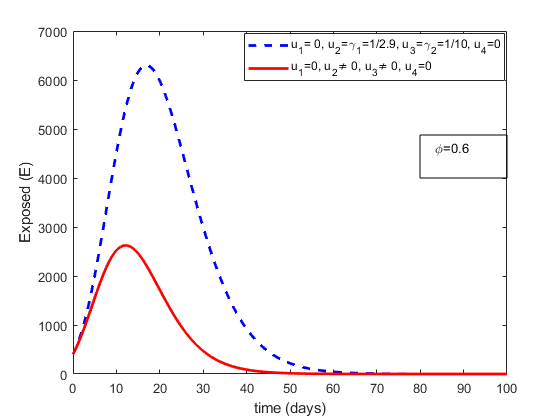}
  \includegraphics[width=0.3\textwidth]{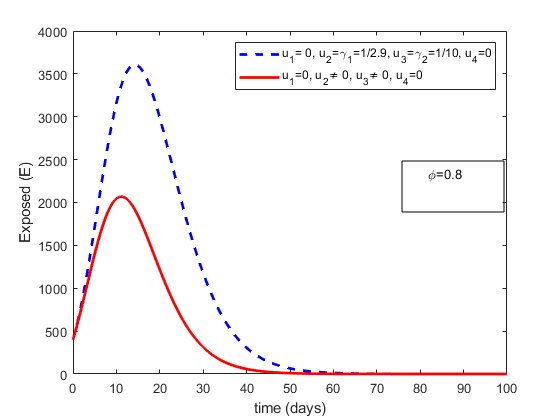}
  \caption{Solution trajectories for Exposed individuals with varying parameter
   $\phi =0.4$, $\phi =0.6$ and $\phi=0.8.$ The red line represents the controlled exposed population
   whiles the blue line represents the uncontrolled exposed population.}
  \label{fig:exposed_B}
\end{figure}

\begin{figure}[!htbp]
  \centering
  \includegraphics[width=0.3\textwidth]{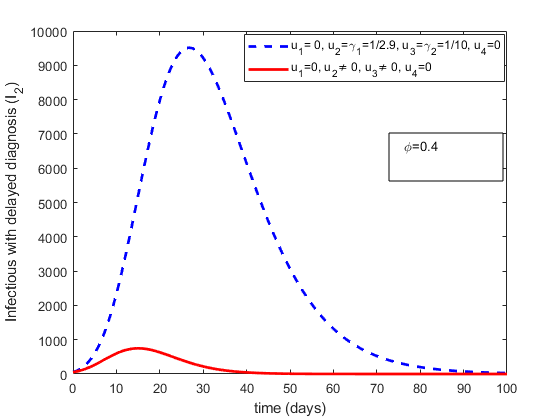}
  \includegraphics[width=0.3\textwidth]{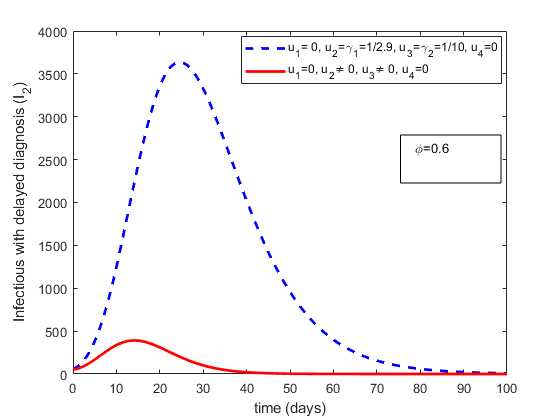}
  \includegraphics[width=0.3\textwidth]{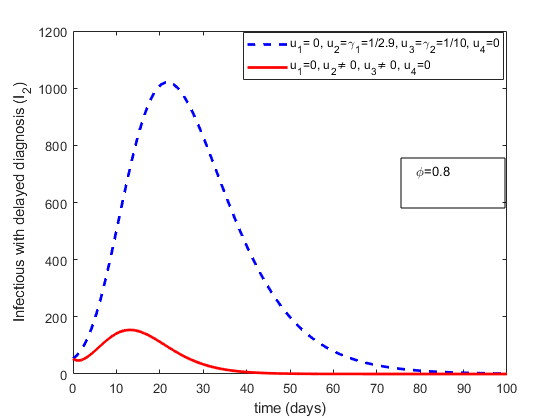}
  \caption{Solution trajectories for Infectious individuals with delayed diagnosis  with varying parameter
   $\phi =0.4$, $\phi =0.6$ and $\phi=0.8.$ The red line represents the controlled delayed diagnosed infectious population
   whiles the blue line represents the uncontrolled infectious population.}
  \label{fig:infectiousdelaydiagnosis_B}
\end{figure}

\begin{figure}[!htbp]
  \centering
  \includegraphics[width=0.3\textwidth]{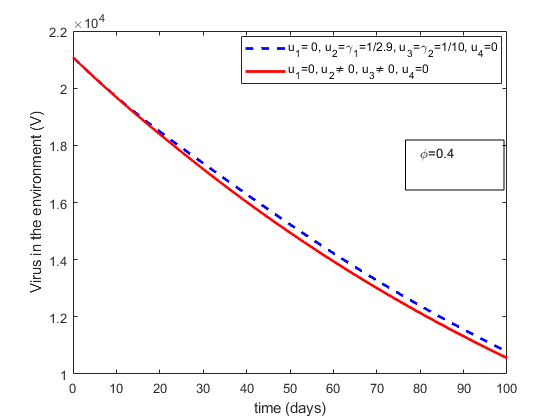}
  \includegraphics[width=0.3\textwidth]{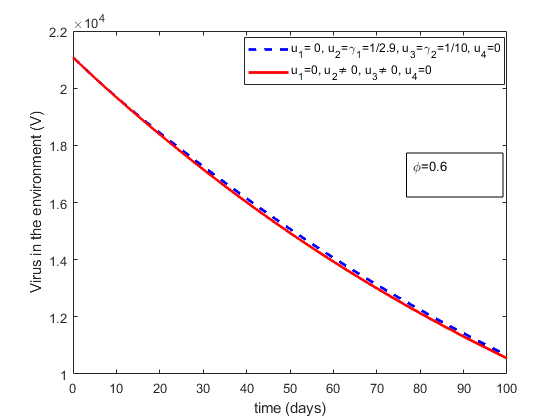}
  \includegraphics[width=0.3\textwidth]{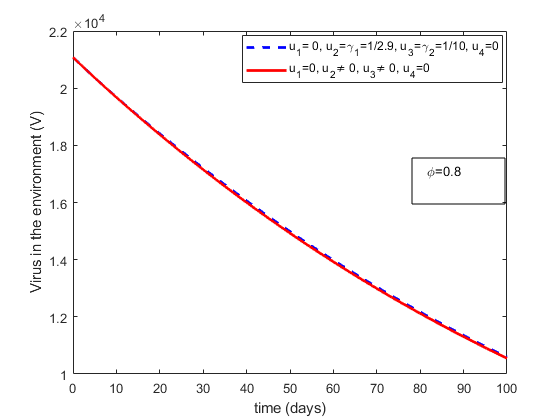}
  \caption{Controlling the viral spread in the environment with varying proportion of symptomatic individuals
  $\phi =0.4$, $\phi =0.6$ and $\phi=0.8$ where the red line represents the controlled environment and the blue
  line represents the uncontrolled environment.}
  \label{fig:virusinenvironment_B}
\end{figure}

\newpage
\subsection{Control Strategy III}
\label{subsec:strategyIII}
In this strategy all the four time-dependent control functions $(u_1\neq 0, u_2\neq 0, u_3\neq 0, u_4\neq 0)$ proposed in this study are incorporated into the optimal control COVID-19 model problem to minimise the objective function. In the non-optimal control model, treatment with early diagnosis $(\gamma_1=u_2=\frac{1}{2.9})$ and treatment with delay diagnosis $(\gamma_2=u_3=\frac{1}{10})$ are captured as constant controls.
\begin{figure}[!htbp]
  \centering
  \includegraphics[width=0.3\textwidth]{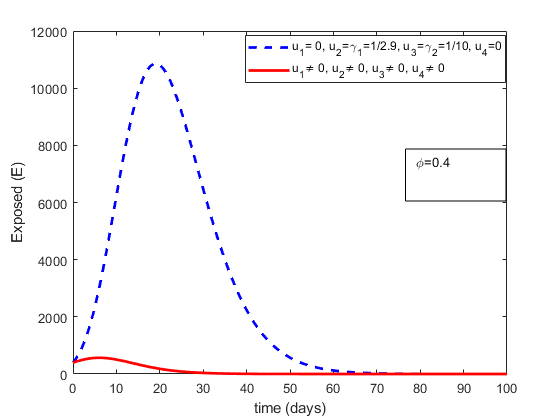}
  \includegraphics[width=0.3\textwidth]{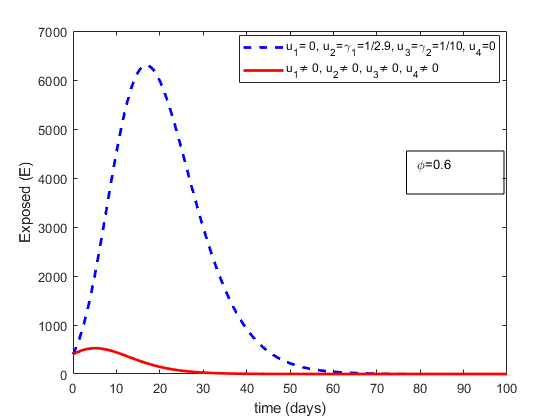}
  \includegraphics[width=0.3\textwidth]{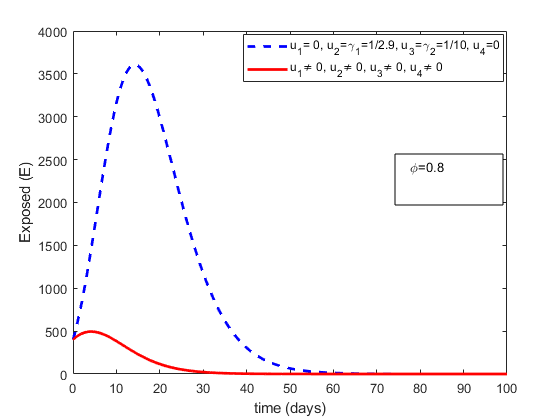}
  \caption{Solution trajectories for Exposed individuals with varying parameter
   $\phi =0.4$, $\phi =0.6$ and $\phi=0.8.$ The red line represents the controlled Exposed population
   whiles the blue line represents the uncontrolled exposed population.}
  \label{fig:exposed_C}
\end{figure}

\begin{figure}[!htbp]
  \centering
  \includegraphics[width=0.3\textwidth]{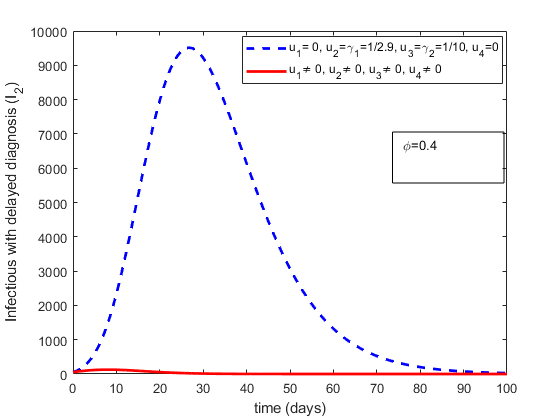}
  \includegraphics[width=0.3\textwidth]{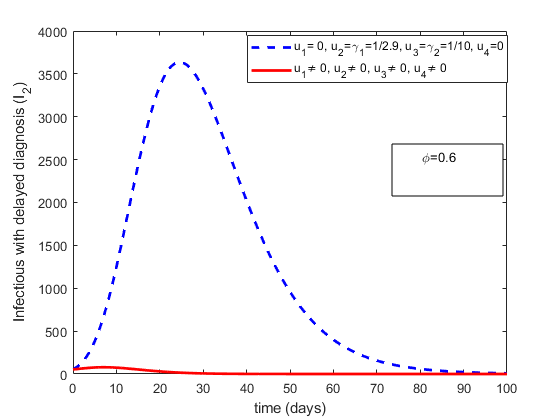}
  \includegraphics[width=0.3\textwidth]{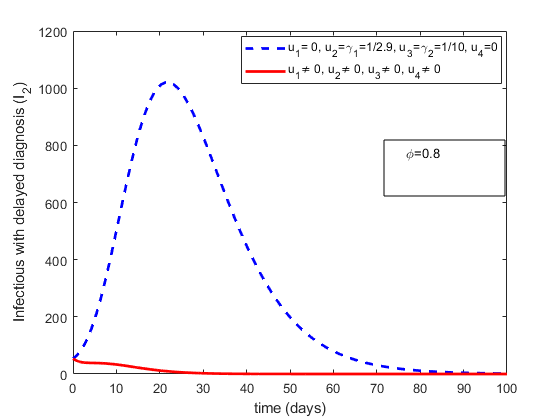}
  \caption{Solution trajectories for Infectious individuals with delayed diagnosis with varying parameter
   $\phi =0.4$, $\phi =0.6$ and $\phi=0.8.$ The red line represents the controlled delayed diagnosed infectious population
   whiles the blue line represents the uncontrolled infectious population.}
  \label{fig:infectiousdelaydiagnosis_C}
\end{figure}

\begin{figure}[!htbp]
  \centering
  \includegraphics[width=0.3\textwidth]{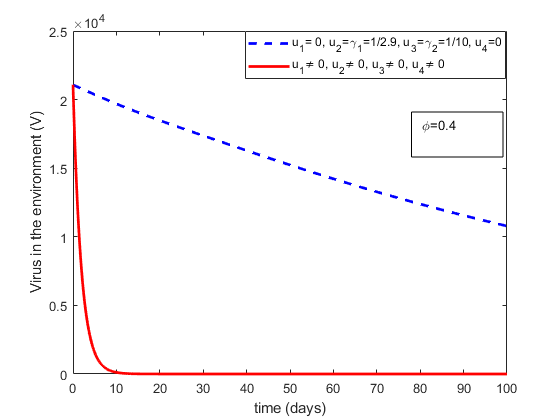}
  \includegraphics[width=0.3\textwidth]{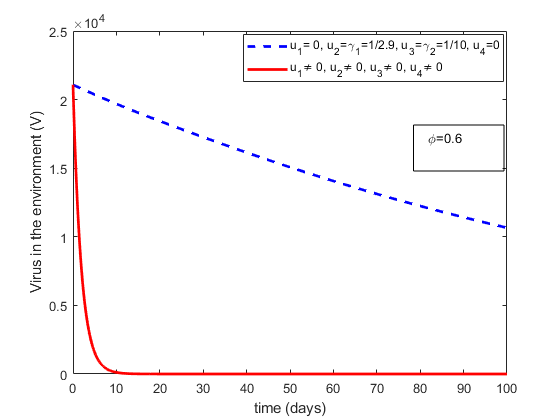}
  \includegraphics[width=0.3\textwidth]{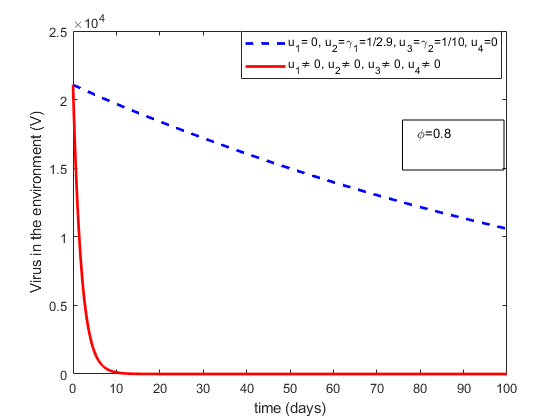}
  \caption{Controlling the viral spread in the environment with varying proportion of symptomatic individuals
  $\phi =0.4$, $\phi =0.6$ and $\phi=0.8$ where the red line represents the controlled environment and the blue
  line represents the uncontrolled environment.}
  \label{fig:virusinenvironment_C}
\end{figure}

\subsection{Simulations results for all three optimal control strategies}
In this subsection, solution trajectories for the number of exposed, infectious with delay diagnosis and virus in the environment for all the three control strategies are numerically compared with that of the non-optimal control model. Our numerical results suggest that, if people can adhere to effective personal protection practices such as the use of hand sanitizers, washing of hands regularly and social distancing, there will less infections in the population. From our results, we can further argue that, effective spraying of the environment and early diagnosis of infected or infectious individuals and treatment can help reduce of the number of COVID-19 infections significantly.
 
\begin{figure}[!htbp]
  \centering
  \includegraphics[width=0.3\textwidth]{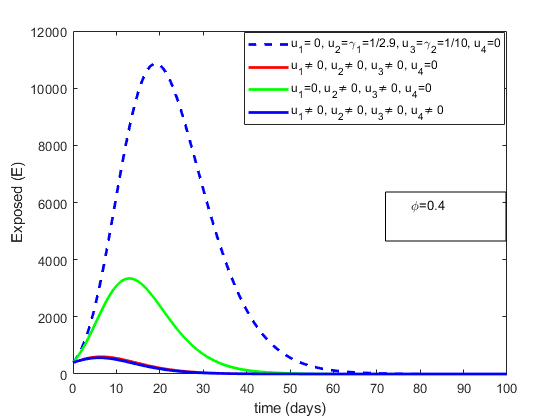}
  \includegraphics[width=0.3\textwidth]{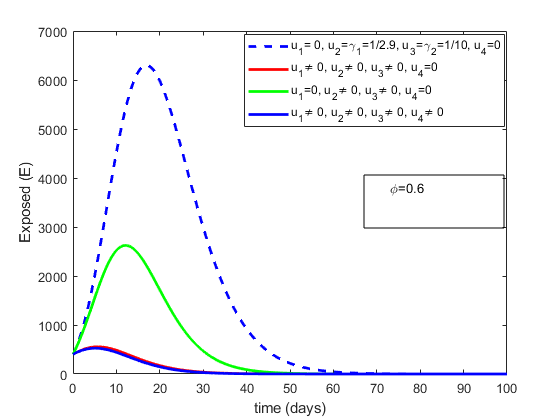}
  \includegraphics[width=0.3\textwidth]{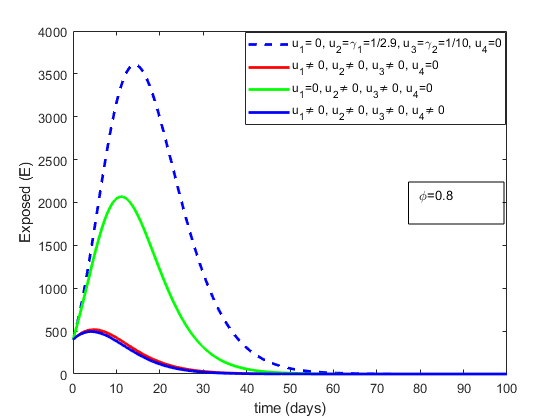}
  \caption{Solutions trajectories for Exposed individuals with $\phi =0.4$, $\phi =0.6$ and $\phi=0.8$ .}
  \label{fig:exposed_D}
\end{figure}

\begin{figure}[!htbp]
  \centering
  \includegraphics[width=0.3\textwidth]{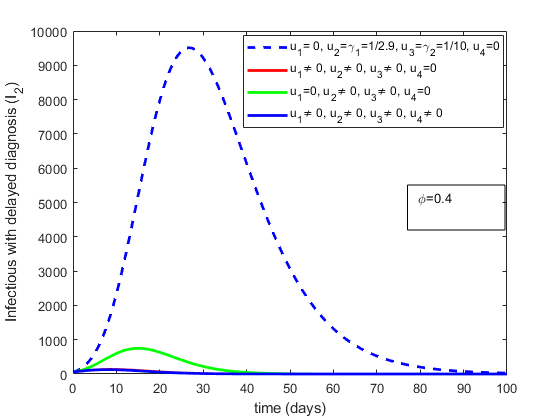}
  \includegraphics[width=0.3\textwidth]{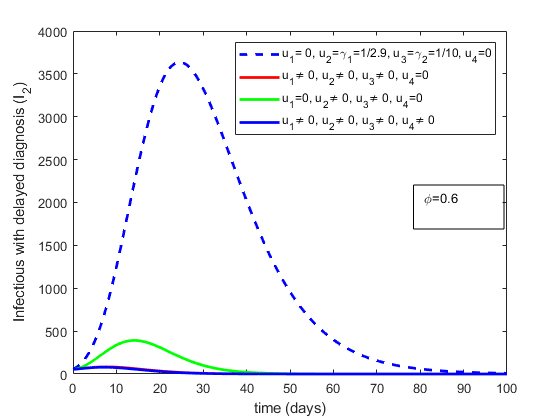}
  \includegraphics[width=0.3\textwidth]{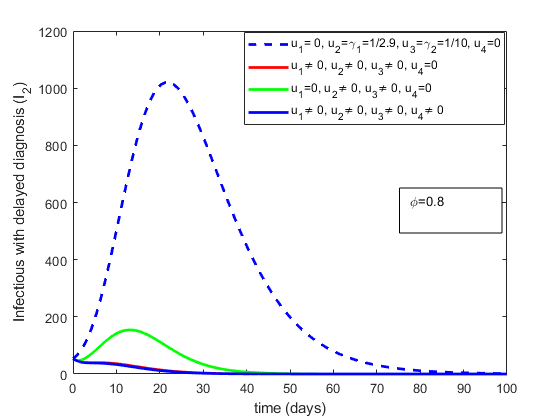}
  \caption{Solutions trajectories for Infectious individuals delayed diagnosis with $\phi =0.4$, $\phi =0.6$ and $\phi=0.8$ .}
  \label{fig:infectious_D}
\end{figure}

\begin{figure}[!htbp]
  \centering
  \includegraphics[width=0.3\textwidth]{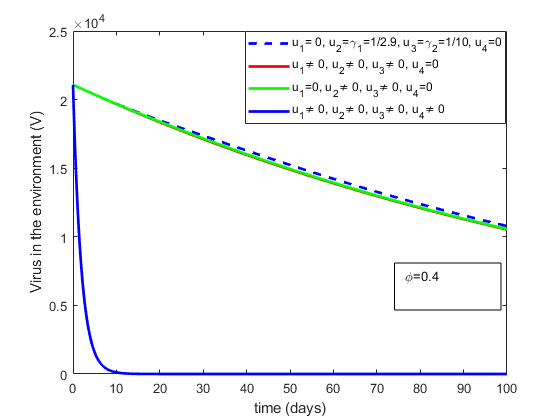}
  \includegraphics[width=0.3\textwidth]{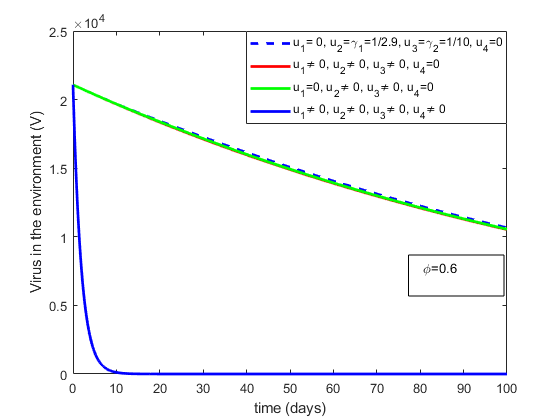}
  \includegraphics[width=0.3\textwidth]{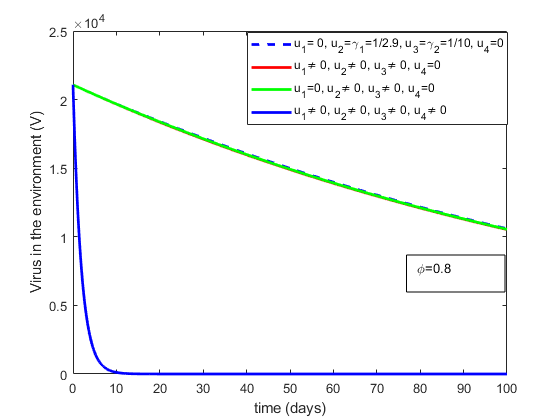}
  \caption{Solutions trajectories for virus in the environment with $\phi =0.4$, $\phi =0.6$ and $\phi=0.8$ .}
  \label{fig:virusinenvironment_D}
\end{figure}


\section{Conclusion}
\label{sec:conclusion}
In this article, we presented control strategies to the transmission dynamics of COVID-19.
The control set including personal protection, treatment when individuals are early diagnosed, 
treatment when individuals are delay diagnosed,
effective spraying of the environment and cleaning possible infected surfaces can help reduce the quantity of the virus.
The numerical simulations reveals that optimal control strategies can yield significant reduction of the number of COVID-19
exposed and infectious or infected individuals in the population. By instituting control strategies, we realise that 
the number of days required for the virus to be eliminated from the system is significantly reduced as compared
to when there is no control strategy. The numerical illustrations also shows that by increasing $\phi$ i.e. improving 
the diagnostic resources, and increasing $\gamma_2$ i.e. improving the diagnostic efficiency, we can control 
significantly the number of new confirmed cases, new infections and thus can reduce the transmission risk. 
From all the three control strategies considered in this study, we realised that the third strategy which captures 
all the four time-dependent control functions yields better results.
\bibliographystyle{plain} 
\bibliography{OPtimalControlCOVID}
}
\end{document}